\newtheorem{theorem}{Theorem}[section]
\newtheorem{proposition}[theorem]{Proposition}
\newtheorem{remark}[theorem]{Remark}
\newcommand{\adj}{\text{\normalfont ad}}
\newcommand{\as}{\alpha}
\newcommand{\au}{\mathbf a_1}
\newcommand{\ba}{\begin{array}}
\newcommand{\bb}{\beta}
\newcommand{\bc}{\begin{center}}
\newcommand{\bdo}{{\mathfrak B}_{12}}
\newcommand{\be}{\begin{equation}}
\newcommand{\bea}{\begin{equation}\begin{array}}
\newcommand{\beas}{\begin{equation*}\begin{array}}
\newcommand{\bef}{\begin{flalign}}
\newcommand{\befs}{\begin{flalign*}}
\newcommand{\ben}{\begin{enumerate}}
\newcommand{\benal}[1]{\begin{enumerate}[label={#1}\alph*)]}
\newcommand{\benar}[1]{\begin{enumerate}[label={#1}\arabic*)]}
\newcommand{\benro}[1]{\begin{enumerate}[label={#1}\roman*)]}
\newcommand{\benRo}[1]{\begin{enumerate}[label={#1}\Roman*)]}
\newcommand{\bes}{\begin{equation*}}
\newcommand{\bit}{\begin{itemize}}
\newcommand{\Bp}{{\mathfrak B}^+}
\newcommand{\de}[1]{\lfloor #1\rfloor}
\newcommand{\der}{\adj(R)}
\newcommand{\dotto}{\mathbf{d_8}}
\newcommand{\ea}{\end{array}}
\newcommand{\ec}{\end{center}}
\newcommand{\ee}{\end{equation}}
\newcommand{\eea}{\end{array}\end{equation}}
\newcommand{\eeas}{\end{array}\end{equation*}}
\newcommand{\eef}{\end{flalign}}
\newcommand{\eefs}{\end{flalign*}}
\newcommand{\een}{\end{enumerate}}
\newcommand{\ees}{\end{equation*}}
\newcommand{\eit}{\end{itemize}}
\newcommand{\eno}{\mathbf{e_9}}
\newcommand{\eo}{\mathbf{e_8}}
\newcommand{\ep}{\varepsilon}
\newcommand{\eps}{\varepsilon}
\newcommand{\esquare}{\hfill $\square$}
\newcommand{\ggu}{\textbf{${\mathfrak g}_{\mathsf u}$}}
\newcommand{\ggue}{\textbf{${\mathfrak g}^e_{\mathsf u}$}}
\newcommand{\gguo}{\textbf{${\mathfrak g}_{\mathsf u 0}$}}
\newcommand{\ggup}{\textbf{${\mathfrak g}^+_{\mathsf u}$}}
\newcommand{\ggupe}{\textbf{${\mathfrak g}^{+e}_{\mathsf u}$}}
\newcommand{\gguu}{\textbf{${\mathfrak g}_{\mathsf u 1}$}}
\newcommand{\gh}{\gamma}
\newcommand{\ical}{\mathcal I}
\newcommand{\impl}{\ \Rightarrow\ }
\newcommand{\Kass}{\mathbb K}
\newcommand{\lam}{\lambda}
\newcommand{\Ll}{\mathbb L}
\newcommand{\lore}{\mathcal{M}}
\newcommand{\nn}{\mathbb N}
\newcommand{\Perm}{\mathbb P}
\newcommand{\poi}{\mathcal{P}}
\newcommand{\poig}{\mathfrak{P}}
\newcommand{\pt}{\tilde p}
\newcommand{\rmu}{\as_{\text{-}1}}
\newcommand{\rop}{\as_{0'}}
\newcommand{\ropp}{\as_{0''}}
\newcommand{\sggu}{\textbf{$\mathfrak {sg}_{\mathsf u}$}}
\newcommand{\spin}{\mathbf{su(2)^{spin}}}
\newcommand{\sref}[1]{{\bf\ref{#1}}}
\newcommand{\tp}{{\scriptstyle \, \otimes\,}}
\newcommand{\usgu}{{\mathfrak U}_{\sggu}}
\newcommand{\zz}{\mathbb Z}
\numberwithin{equation}{section}
\begin{document}

\begin{titlepage}

\vskip 2.0 cm
\begin{center}  {\Huge{\bf Space, Matter and Interactions\\\vskip 0.2 cm in a Quantum Early Universe}} \\\vskip 0.5 cm {\huge{\bf Part II : Superalgebras and Vertex Algebras}}

\vskip 2.5 cm

{\Large{\bf Piero Truini$^{1,2}$},  {\bf Alessio Marrani$^{3,4}$},\\ {\bf Michael Rios$^{2}$}, {\bf Klee Irwin$^{2}$}}

\vskip 1.0 cm

$^1${\sl INFN, sez. di Genova, via Dodecaneso 33, I-16146 Genova, Italy\\
	\texttt{piero.truini@ge.infn.it}}\\

$^2${\sl
QGR, 101 S. Topanga Canyon Rd., 1159 Los Angeles,CA 90290, USA}\\

\vskip 0.5
cm

\vskip 0.5 cm

$^3${\sl Centro Studi e Ricerche Enrico Fermi, via Panisperna 89A,
I-00184, Roma, Italy}\\

\vskip 0.5 cm

$^4${\sl Dipartimento di Fisica
e Astronomia Galileo Galilei, Universit\`a di Padova,\\and INFN, sezione
di
Padova, Via Marzolo 8, I-35131 Padova, Italy\\
\texttt{jazzphyzz@gmail.com}}\\

\vskip 0.5 cm

\vskip 0.5 cm

 \end{center}

 \vskip 4.0 cm

\begin{%
abstract}

In our investigation on quantum gravity, we introduce an infinite dimensional complex Lie algebra $\ggu$ that extends $\eno$. It is defined through a symmetric Cartan matrix of a rank 12 Borcherds algebra. We turn $\ggu$ into a Lie superalgebra $\sggu$ with no superpartners, in order to comply with the Pauli exclusion principle. There is a natural action of the Poincar\'e group on $\sggu$, which is an automorphism in the massive sector. We introduce a mechanism for scattering that includes decays as particular {\it resonant scattering}. Finally, we complete the model by merging the local $\sggu$ into a vertex-type algebra.
 \end{abstract}
\vspace{24pt} \end{titlepage}


\newpage \tableofcontents \newpage


\section{Introduction}

This is the second of two papers - see also \cite{mym1} - describing an
algebraic model of quantum gravity.\newline
In the first paper we have described the basic principles of our model and
we have investigated the mathematical structures that may suit our purpose.
In particular, we have focused on rank-12 infinite dimensional Kac-Moody,
\cite{kac}, and Borcherds algebras, \cite{borc1} \cite{borc2}, and we have
given physical and mathematical reasons why the latter are preferable.%
\newline

In our model for the expansion of quantum early Universe, \cite{mym1} \cite%
{pt2}, the need for an infinite dimensional Lie algebra stems from the
unlimited number of possible 4-momenta, but at each fixed cosmological time
the number of generators and roots involved is \textit{finite}. There is a
known algorithm of Lie algebra theory that allows to determine the structure
constants among a finite number of generators of a Borcherds algebra \cite%
{borc1} \cite{borc2}. Let us grade the commutators by \textit{levels}, by
saying that the commutators involving $n$ simple roots have level $n-1$. A
consistent set of structure constants is calculable \textit{level by level},
and once the structure constants are calculated at level $n$, they will not
be affected by the calculation at any level $m>n$. There are computer
programs that apply this algorithm and give the explicit structure constants
level by level, see for instance the package \textit{LieRing} of GAP,
developed by S. Cical\`{o} and W. A. de Graaf \cite{gap}.\newline

However, for the sake of simplicity, in \cite{mym1} we have chosen to deal
with a simpler Lie algebra, $\ggu,$ that extends $\eo$ and $\eno$. In the
present paper, we will start invertigating a physical model for quantum
gravity based on this particular rank-12 algebra $\ggu$. We will start by
focussing on local aspects of the algebraic model: in Sec. \sref{s:ggu}, we
recall $\ggu$, which is then turned into a Lie superalgebra $\sggu$ in Sec. %
\sref{s:sggu}. Sec. \sref{s:scatt} will then discuss interactions,
scattering processes and decays, whereas the role of the Poincar\'{e} group
is analyzed in Sec. \sref{s:Poin}. Finally, in Secs. \sref{s:stat} and %
\sref{s:vert} we will define the quantum states, and then we will merge the
algebra $\sggu$ into a vertex-type algebra, representing the quantum early
Universe with its expanding spacetime.

\section{The Lie algebra $\ggu$\label{s:ggu}}

We start and consider $\Bp$, the Lie subalgebra of the rank-12 Borcherds
algebra $\bdo$ introduced in \cite{mym1} and generated by the Chevalley
generators corresponding to positive roots. A further simplification will
then give rise to $\ggu$, the Lie algebra that acts \textit{locally} on the
quantum state of the Universe \cite{mym1}.

We recall from \cite{mym1} that the generalized Cartan matrix for the
Borcherds algebra $\bdo$, with simple roots denoted by $\rmu,\ropp,\rop,\as%
_{0},...,\as_{8}$, is%
\begin{equation}
\left(
\begin{array}{cccccccccccc}
-1 & -1 & -1 & -1 & 0 & 0 & 0 & 0 & 0 & 0 & 0 & 0 \\
-1 & 0 & -1 & -1 & 0 & 0 & 0 & 0 & 0 & 0 & 0 & 0 \\
-1 & -1 & 0 & -1 & 0 & 0 & 0 & 0 & 0 & 0 & 0 & 0 \\
-1 & -1 & -1 & 2 & -1 & 0 & 0 & 0 & 0 & 0 & 0 & 0 \\
0 & 0 & 0 & -1 & 2 & -1 & 0 & 0 & 0 & 0 & 0 & 0 \\
0 & 0 & 0 & 0 & -1 & 2 & -1 & 0 & 0 & 0 & 0 & 0 \\
0 & 0 & 0 & 0 & 0 & -1 & 2 & -1 & 0 & 0 & 0 & 0 \\
0 & 0 & 0 & 0 & 0 & 0 & -1 & 2 & -1 & 0 & 0 & 0 \\
0 & 0 & 0 & 0 & 0 & 0 & 0 & -1 & 2 & -1 & -1 & 0 \\
0 & 0 & 0 & 0 & 0 & 0 & 0 & 0 & -1 & 2 & 0 & 0 \\
0 & 0 & 0 & 0 & 0 & 0 & 0 & 0 & -1 & 0 & 2 & -1 \\
0 & 0 & 0 & 0 & 0 & 0 & 0 & 0 & 0 & 0 & -1 & 2%
\end{array}%
\right)  \label{cm12}
\end{equation}
By defining%
\begin{equation}
\delta :=\as_{0}+2\as_{1}+3\as_{2}+4\as_{3}+5\as_{4}+6\as_{5}+3\as_{6}+4\as%
_{7}+2\as_{8},
\end{equation}
the 4-momentum vector can be written as%
\begin{equation}
p:=E_{p}\rmu+p_{x}(\ropp-\rmu)+p_{y}(\rop-\rmu)+p_{z}(\delta -\rmu).
\label{fmom}
\end{equation}

Then, we restrict to the subalgebra $\Bp$ of $\bdo$, namely to positive
roots $r=\sum_{\ical}\lam_{i}\as_{i}$, $\ical:=\{-1,0^{\prime \prime
},0^{\prime },0,...,8\}$, with $\lam_{i}\in \nn\cup \{0\}$. Consequently,
the 4-momentum (\ref{fmom}) becomes%
\begin{equation}
p=(E_{p},p_{x},p_{y},p_{z})=(\lam_{-1}+\lam_{0\prime \prime }+\lam_{0\prime
}+\lam_{0},\lam_{0\prime \prime },\lam_{0\prime },\lam_{0})  \label{palfa}
\end{equation}
with $\lam_{-1},\lam_{0^{\prime \prime }},\lam_{0^{\prime }},\lam_{0}\geq 0$%
, implying%
\begin{equation}
m^{2}:=-p^{2}\geq 0,
\end{equation}
namely $p$ either lightlike or timelike. In particular ($i,j\in \left\{
-1,0^{\prime \prime },0^{\prime },0\right\} $),%
\bea{ll}
p^{2} &=-\left( \lambda _{-1}^{2}+2\lambda _{-1}\sum_{i\neq -1}\lambda
_{i}+\sum_{i\neq j,~i,j\neq -1}\lambda _{i}\lambda _{j}\right)\\
& \left\{
\begin{array}{ll}
=0 & \text{if~}\lambda _{-1}=0~\text{and~at~most~one~}\lambda _{i}\neq
0,~i\neq -1\text{,} \\
=-1 & \text{if~}\lambda _{-1}=1~\text{and~all~}\lambda _{i}=0,~i\neq -1\text{%
,} \\
\leqslant -2 & \text{otherwise}.%
\end{array}%
\right.\\
\eea
As in \cite{mym1}, we write a root $r=\sum_{\ical}\lam_{i}\as_{i}$ as%
\begin{equation}
r=\alpha +p,
\end{equation}%
with%
\begin{eqnarray}
\Phi _{8} &\ni &\alpha =(\lam_{1}-2\lam_{0})\as_{1}+(\lam_{2}-3\lam_{0})\as%
_{2}+(\lam_{3}-4\lam_{0})\as_{3}\newline
\notag \\
&&+(\lam_{4}-5\lam_{0})\as_{4}+(\lam_{5}-6\lam_{0})\as_{5}+(\lam_{6}-3\lam%
_{0})\as_{6}\newline
\notag \\
&&+(\lam_{7}-4\lam_{0})\as_{7}+(\lam_{8}-2\lam_{0})\as_{8},
\end{eqnarray}%
and $p$ given by (\ref{palfa}).

\begin{remark}
Notice that the mass of a particle cannot be arbitrary small, since there is
a lower limit, $m\geq 1$.
\end{remark}

Hence, we extend the possible values of the 4-momentum $p:=(E,\vec{p})$ by
including those with opposite 3-momentum $\pt=(E,-\vec{p})$, as explained in
\cite{mym1}, so that%
\begin{equation}
p=(E_{p},p_{x},p_{y},p_{z})\ ,\quad E_{p}\in \nn\ ,\ p_{x},p_{y},p_{z}\in \zz%
\ ,\ p^{2}\leq 0.
\end{equation}
The algebra $\ggu$ extends the 1+1-dimensional toy model based on $\eno$
discussed in \cite{mym1}; it is defined as the algebra generated by $x_{p}^{%
\as}$ and $x_{\as+p}$, such that $p^{2}\leq 0$, satisfying the following
commutation relations:%
\bea{lcl}
\left[ x_{p_{1}}^{\alpha },x_{p_{2}}^{\beta }\right] &=&0, \\
\left[ x_{p_{1}}^{\alpha },x_{\beta +p_{2}}\right] &=&\left( \alpha ,\beta
\right) x_{\beta +p_{1}+p_{2}},  \\
\left[ x_{\alpha +p_{1}},x_{\beta +p_{2}}\right] &=&\left\{
\begin{array}{ll}
0, & \text{if~}\alpha +\beta \notin \Phi _{8}\cup \left\{ 0\right\} ; \\
\varepsilon \left( \alpha ,\beta \right) x_{\alpha +\beta +p_{1}+p_{2}}, &
\text{if~}\alpha +\beta \in \Phi _{8}; \\
-x_{p_{1}+p_{2}}^{\alpha }, & \text{if~}\alpha +\beta =0,%
\end{array}%
\right.  \label{crgu}
\eea
where $\left( \cdot ,\cdot \right) $ is the Euclidean scalar product in $%
\mathbb{R}^{8}$, the function $\varepsilon :\Phi _{8}\times \Phi
_{8}\rightarrow \left\{ -1,1\right\} $ is the asymmetry function \cite{kac,
graaf, mym1}, and
\begin{equation}\label{extra1}
x_{p}^{-\alpha }=-x_{p}^{\alpha },~~~\left[ x_{p_{1}}^{\alpha },x_{\beta
+p_{2}}\right] =-\left[ x_{\beta +p_{2}},x_{p_{1}}^{\alpha }\right],
\end{equation}%
in order to have an antisymmetric algebra.\\
Moreover, for consistency, we require that%
\begin{equation}\label{extra2}
x_{p}^{\alpha +\beta }=x_{p}^{\alpha }+x_{p}^{\beta }.
\end{equation}%
Notice that $p_{1}^{2},p_{2}^{2}\leqslant 0$ implies $\left(
p_{1}+p_{2}\right) ^{2}\leqslant 0$.

\begin{remark}
\label{ggup} Notice also that $p_{1}^{2},p_{2}^{2}<0$ imply $%
(p_{1}+p_{2})^{2}<0$. Thus, there is a subalgebra $\ggup$ of $\ggu$ with the
same commutation relations \eqref{crgu}, but with generators $x_{\as+p}$ and
$x_{p}^{\as}$ such that $p^{2}<0$ (only massive particles).
\end{remark}

\begin{proposition}
\label{e9ext} The algebra $\ggu$ with relations \eqref{crgu}, \eqref{extra1}%
, \eqref{extra2} is an infinite-dimensional Lie algebra.
\end{proposition}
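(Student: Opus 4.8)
The plan is to recognise $\ggu$ as a \emph{current algebra}: I claim it is isomorphic to $\eo \otimes \cc[M]$, where $\eo$ is the finite-dimensional $E_8$ Lie algebra (with root system $\Phi_8$ and Cartan subalgebra $\hh$) and $\cc[M] = \bigoplus_{p\in M}\cc\, t^p$ is the commutative associative algebra of the additive monoid $M := \{p : p^2 \le 0\}$ of admissible $4$-momenta, with $t^{p_1}t^{p_2}=t^{p_1+p_2}$. The closure property noted just above, $p_1^2,p_2^2\le 0 \impl (p_1+p_2)^2\le 0$, is precisely what makes $M$ a monoid under addition, so $\cc[M]$ is well defined. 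Since for \emph{any} Lie algebra $\gggg$ and \emph{any} commutative associative algebra $A$ the bracket $[x\otimes a,\,y\otimes b]:=[x,y]\otimes ab$ turns $\gggg\otimes A$ into a Lie algebra, establishing such an isomorphism yields at once antisymmetry and the Jacobi identity for $\ggu$; infinite-dimensionality is then immediate, since $M$ is infinite (e.g.\ $(E,0,0,0)$ for every $E\in\nn$), whence $\eo\otimes\cc[M]$ contains infinitely many independent elements $e_\alpha\otimes t^p$.

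To build the isomorphism I fix a Chevalley basis $\{e_\alpha : \alpha\in\Phi_8\}$ of the root spaces of $\eo$ together with $\hh$, normalised so that $[e_\alpha,e_\beta]=\varepsilon(\alpha,\beta)e_{\alpha+\beta}$ when $\alpha+\beta\in\Phi_8$, $[e_\alpha,e_\beta]=0$ when $\alpha+\beta\notin\Phi_8\cup\{0\}$, and $[e_\alpha,e_{-\alpha}]=-h_\alpha$, where $h_\alpha\in\hh$ is the element dual to $\alpha$ under the Euclidean form, characterised by $[h_\alpha,e_\beta]=(\alpha,\beta)e_\beta$. I then set $\phi(x_{\alpha+p}):=e_\alpha\otimes t^p$ and $\phi(x_p^\alpha):=h_\alpha\otimes t^p$. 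Well-definedness amounts to checking compatibility with \eqref{extra1} and \eqref{extra2}: the relations $x_p^{-\alpha}=-x_p^\alpha$ and $x_p^{\alpha+\beta}=x_p^\alpha+x_p^\beta$ correspond under $\phi$ to $h_{-\alpha}=-h_\alpha$ and $h_{\alpha+\beta}=h_\alpha+h_\beta$, which hold because $\alpha\mapsto h_\alpha$ is linear; the antisymmetry half of \eqref{extra1} is automatic in $\eo\otimes\cc[M]$.

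Next I verify that each line of \eqref{crgu} is the image of the matching $\eo$ relation, the momentum labels being carried along multiplicatively in $\cc[M]$: $[x_{p_1}^\alpha,x_{p_2}^\beta]\mapsto[h_\alpha,h_\beta]\otimes t^{p_1+p_2}=0$; $[x_{p_1}^\alpha,x_{\beta+p_2}]\mapsto[h_\alpha,e_\beta]\otimes t^{p_1+p_2}=(\alpha,\beta)\,e_\beta\otimes t^{p_1+p_2}$; and $[x_{\alpha+p_1},x_{\beta+p_2}]\mapsto[e_\alpha,e_\beta]\otimes t^{p_1+p_2}$, whose three sub-cases ($\alpha+\beta\notin\Phi_8\cup\{0\}$, $\alpha+\beta\in\Phi_8$, $\alpha+\beta=0$) reproduce exactly the three branches of \eqref{crgu}, the last giving the Cartan output $-h_\alpha\otimes t^{p_1+p_2}=\phi(-x_{p_1+p_2}^\alpha)$. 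Thus $\phi$ carries the defining spanning set of $\ggu$ bijectively onto the basis $\{e_\alpha\otimes t^p\}\cup\{(\text{basis of }\hh)\otimes t^p\}$ of $\eo\otimes\cc[M]$ and carries \eqref{crgu} onto the bracket of $\eo\otimes\cc[M]$; non-degeneracy of the Cartan-type generators is guaranteed because the roots span $\hhs$, so $\alpha\mapsto h_\alpha$ is injective. Transporting the Lie structure of $\eo\otimes\cc[M]$ along $\phi$ then makes $\ggu$ a Lie algebra.

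The step I expect to carry the real content — and the one a direct, presentation-only verification would have to face head-on — is the Jacobi identity, which here is inherited from $\eo$. The delicate triple is one with two root generators whose roots cancel, say $x_{\alpha+p_1}$ and $x_{-\alpha+p_2}$, together with a third $x_{\gamma+p_3}$: the ``$\alpha+\beta=0\impl$ Cartan'' branch of \eqref{crgu} must mesh with the ``Cartan-on-root'' branch so that the coefficients $(\alpha,\gamma)$ balance, and consistency across all such triples is exactly the statement that the asymmetry function $\varepsilon$ obeys its bimultiplicativity and normalisation identities (e.g.\ $\varepsilon(\alpha+\beta,\gamma)=\varepsilon(\alpha,\gamma)\varepsilon(\beta,\gamma)$ and $\varepsilon(\alpha,-\alpha)=-1$), which is precisely what makes $\eo$ a Lie algebra. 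Invoking the known validity of these identities for $\Phi_8$ discharges the obstacle and completes the argument.
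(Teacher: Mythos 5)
Your proof is correct, but it takes a genuinely different route from the paper's. The paper proves the Jacobi identity head-on, by an exhaustive case analysis over the types of the three generators (type-0 $x^\as_p$ versus type-1 $x_{\as+p}$) and over whether each pairwise sum of roots lies in $\Phi_8$, equals $0$, or is neither, using only the elementary facts about $(\cdot,\cdot)$ and the asymmetry function $\ep$ collected in its Propositions \sref{sproots} and \sref{epsprop}. You instead identify $\ggu$ with the current algebra $\eo\tp\cc[M]$ over the monoid algebra of admissible momenta $M=\{p:p^2\le 0\}$, so that antisymmetry and Jacobi are inherited from $\eo$ and infinite-dimensionality is immediate. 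Your approach is shorter and more conceptual, it explains \emph{why} the momentum labels are harmless (they just ride along multiplicatively in $\cc[M]$, closure of $M$ under addition being the only input), and it makes the Lie-algebra structure of the variants $\ggup$, $\ggue$, $\ggupe$ follow for free by changing the monoid. What it buys comes at the cost of presupposing the lattice (Chevalley--Frenkel--Kac) realization of $\eo$, i.e.\ the existence of a basis with $[e_\as,e_\bb]=\ep(\as,\bb)e_{\as+\bb}$, $[e_\as,e_{-\as}]=-h_\as$, $[h_\as,e_\bb]=(\as,\bb)e_\bb$: that is a standard theorem, but its proof is exactly the $\ep$-identity bookkeeping that the paper carries out by hand, so the real content is outsourced rather than eliminated. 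Two small points you handle correctly but should keep explicit: well-definedness of $\phi$ on the $x^\as_p$ requires compatibility with the linear relations \eqref{extra1}--\eqref{extra2} via the linearity of $\as\mapsto h_\as$ (the $x^\as_p$ for fixed $p$ span only an $8$-dimensional space), and the sign convention $[e_\as,e_{-\as}]=-h_\as$ must be checked against $\ep(\as,-\as)=-1$ to match the third branch of \eqref{crgu} --- it does.
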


The proof is in Appendix \sref{a1}.

The algebra $\ggu$ has a natural 2-grading inherited by that of $\eo$, due
to the decomposition into the subalgebra $\dotto$ and its Weyl spinor, \cite%
{mym1}. The generators $x_{\as+p}$ are fermionic (resp. bosonic) if $\as$ is
fermionic (resp. bosonic), whereas the generators $x_{p}^{\as}$ are bosonic,
due to the commutation relations \eqref{crgu}.

\section{The Lie Superalgebra $\sggu$\label{s:sggu}}

In order to turn the Lie algebra $\ggu$ into a Lie superalgebra, we exploit
the \textit{Grassmann envelope} $G(\ggu)$ of $\ggu$,%
\begin{equation}
G(\ggu):=\gguo\tp G_{0}+\gguu\tp G_{1},
\end{equation}
where $\gguo$ is the boson subalgebra of $\ggu$, $\gguu$ its fermionic part,
and $G_{0},G_{1}$ are the even, odd parts of a Grassmann algebra with
infinitely many generators. More precisely, we map each generator $X$ of $%
\ggu$ to the generator $X\tp e_{x}$ of $G(\ggu)$, where $e_{x}$ is even if $%
X $ is bosonic, odd if $X$ is fermionic, and $e_{x}\neq e_{y}$ if $X\neq Y$.
Then the graded Jacobi identity is satisfied, \cite{sh}, and one obtains, by
linearity, a Lie superalgebra, that we denote by $\sggu$.

Let us show this straightforward calculation explicitly.\newline
Let $X,Y,Z$ be generators of $\ggu$ of degree $i,j,k\in \{0,1\}$
respectively. We remind, from \cite{mym1}, that the generators $x_{p}^{\as}$
have degree 0, whereas the generators $x_{\as+p}$ have degree $\de{\as}=0$
if $\as$ is bosonic and degree $\de{\as}=1$ if $\as$ is fermionic.\newline
Let $[X,Y]$ still denote the product of $X,Y$ in $\ggu$ and $X\tp e_{x}\circ
Y\tp e_{y}$ the corresponding product in $\sggu$. Then:%
\bea{lcl}
X\tp e_{x}\circ Y\tp e_{y} &=&[X,Y]\tp e_{x}e_{y}=-[Y,X]\tp e_{x}e_{y}\\
&=&
-(-1)^{ij}[Y,X]\tp e_{y}e_{x}=-(-1)^{ij}Y\tp e_{y}\circ X\tp e_{x}
\eea
\bea{l}\label{sjac}
(-1)^{ik}((X\tp e_x \circ Y\tp e_y)\circ Z\tp e_z) + (-1)^{jk}((Z\tp e_z \circ X\tp e_x)\circ Y\tp e_y)\\
+ (-1)^{ij}((Y\tp e_y \circ Z\tp e_z)\circ X\tp e_x) = (-1)^{ik}\left[ [ X,Y ] , Z \right] \tp e_xe_ye_z \\
+(-1)^{jk} \left[ [ Z,X ] , Y \right] \tp e_ze_xe_y + (-1)^{ij}\left[ [ Y,Z ] , X \right] \tp e_ye_ze_x =\\
\left((-1)^{ik}\left[ [ X,Y ] , Z \right] +(-1)^{jk}(-1)^{k(i+j)}\left[ [ Z,X ] , Y \right]+\right.\\
\left. (-1)^{ij}(-1)^{i(j+k)}\left[ [ Y,Z ] , X \right]\right)\tp e_xe_ye_z = (-1)^{ik} J \tp e_xe_ye_z =0
\eea
where $J=0$ is the Jacobi identity for $\ggu$.

\begin{remark}
\label{r:prod} The product in the Lie superalgebra $\sggu$ is effectively
the same as in the Lie algebra $\ggu$ but its symmetry property is crucial
for the elements of the universal enveloping algebra, that appear point by
point in the model for the expanding Universe. The universal enveloping
algebra is indeed the tensor algebra $1\, \oplus\, \sggu\, \oplus\, \sggu\tp %
\sggu\, \oplus...$ modulo the relations $x\tp y - (-1)^{ij} y\tp x = x\circ
y $ for all $x,y\in \sggu$, embedded in the tensor algebra, of degree $i,j$
respectively. In particular this makes the fermions comply with the Pauli
exclusion principle: $x\tp x = 0$, for x fermionic, whereas the same
relation is trivial, $0=0$, if x is bosonic.
\end{remark}

\begin{remark}
The Lie superalgebra $\sggu$ does not involve \emph{superpartners}. The
elements are exactly the same as those of the algebra $\ggu$. The importance
we attribute to this algebra is solely due to the fulfillment of the Pauli
exclusion principle.
\end{remark}

\begin{remark}
We also notice that the use of the Grassmann envelope produces zero divisors
in the algebra whenever the same fermionic root, \emph{with the same momentum%
}, is in two interacting particles.
\end{remark}

\section{Interaction graphs\label{s:scatt}}

As mentioned in the Introduction of \cite{mym1}, the interactions have a
tree structure whose building blocks involve only three particles, and they
are expressed by the product in the underlying algebra. The scattering
amplitudes are proportional, up to normalization, to the structure constants
of the related products. An ordering of the roots has to be \textit{a priori}
set, so that the commutator between two generators is taken according to
that order. Quantum interference is obviously independent from the ordering
choice.

In this section, we set up a \textit{correspondence} between graphs and
products in the algebra $\ggu$ spanned by the generators%
\begin{equation}
\{x_{p}^{\as}\ ,\ x_{\as+p}\ ;~\as\in \Phi _{8}\ ,\ p=(E,\vec{p}),\
p^{2}\leq 0\},
\end{equation}%
and the procedure can then be trivially extended to $\sggu$ by Remark %
\sref{r:prod}.

We include the decays among the possible scatterings as \textit{resonance
interactions}, a well known and studied phenomenon in many physical
processes, as we now explain. Suppose that two
particles, one with charge $\as\in \Phi _{8}\cup \{0\}$ and momentum $p_{1}$%
, the other with charge $\bb\in \Phi _{8}\cup \{0\}$ and momentum $p_{2}$,
are present at the same space point and are such that $\as-\bb\in \Phi
_{8}\cup \{0\}$ and $E_{1}>E_{2}$; then, a decay occurs, with a certain
amplitude, producing the outgoing particles of charges $(\as-\bb)$, $\bb$
with momenta $(p_{1}-p_{2})$ and $p_{2}$ respectively, whereas the particle
with charge $\bb\in \Phi _{8}\cup \{0\}$ and momentum $p_{2}$ shifts in
space according to the expansion rule, see \cite{mym1}. The amplitude for
the decay is proportional, up to normalization, to the structure constant of
the commutator between the outgoing particles (we will comment on this
viewpoint on the decays at the end of this section).

The possible situations for an elementary interaction are depicted in the
following graphs, Figs. \ref{int1} $\div $ \ref{int3} (the \textit{resonant}
particle is also shown in case of a decay). In the graphs we use wiggly
lines for the neutral particles $x_{p}^{\as}$ and straight lines for the
charged particles $x_{\as+p}$. Red lines indicate outgoing particles and
blue lines incoming ones.

We would like to stress that the orientation of the graphs is not
significant; these are not Feynmann diagrams, although they resemble them:
only the distinction between incoming and outgoing particles matters; it
complies with 4-momentum and charge conservation.

\tikzset{
	>={triangle 45},
p_in/.style={thick,draw=blue, postaction={decorate},
    decoration={markings,mark=at position .5 with {\arrow[blue]{triangle 45}}}},
p_out/.style={thick,draw=magenta, postaction={decorate},
    decoration={markings,mark=at position .65 with {\arrow[magenta]{triangle 45}}}},
ph_out/.style={decorate, thick,draw=magenta, decoration={snake,amplitude=.5mm}},
ph_in/.style={decorate, thick,draw=blue, decoration={snake,amplitude=.5mm}},
 }
\begin{figure}[hbtp]
\centering%
\begin{tikzpicture}
\draw[p_in] (210:2) -- (0,0);
\node at (-.9,-1) {$x_{\beta+p_2}$};
\draw[->,ph_in] (150:2) -- (150:1);
\draw[ph_in] (150:1) -- (0,0);
\node at (-.9,1) {$x^\alpha_{p_1}$};
\draw[p_out] (0,0) -- (2,0);
\node at (1.1,-.3) {$x_{\beta+p_1+p_2}$};
\node at (0,-3) {(a)};
\end{tikzpicture}
\hspace{1cm}
\begin{tikzpicture}
\draw[p_in] (-2,0) -- (0,0);
\node at (-1,-.3) {$x_{\beta+p_1+p_2}$};
\draw[p_out] (0,0) -- (330:2);
\node at (.9,-1) {$x_{\beta+p_2}$};
\draw[->,ph_out] (0,0) -- (30:1);
\draw[ph_out] (30:1) -- (30:2);
\node at(.8,1) {$x^\alpha_{p_1}$};
\draw[p_in] (-2,-1.7) -- node[label=below:$x_{\beta+p_2}$] {} (0,-1.7);
\draw[dashed,red] (0,-1.7) -- node[label=below:$shifted$] {} (2,-1.7);
\node at (0,-3) {(b)};
\end{tikzpicture}
\caption{$[x^\as_{p_1},x_{\bb+p_2}]=(\as,\bb)x_{\bb+p_1+p_2}$; (a): $x^\as$
absorption by $x_\bb$; (b): $x^\as$ emission by $x_\bb$ (similarly for $x^%
\protect\alpha_{p_1}$ and $x_{\protect\beta+p_2}$ interchanged).}
\label{int1}
\end{figure}
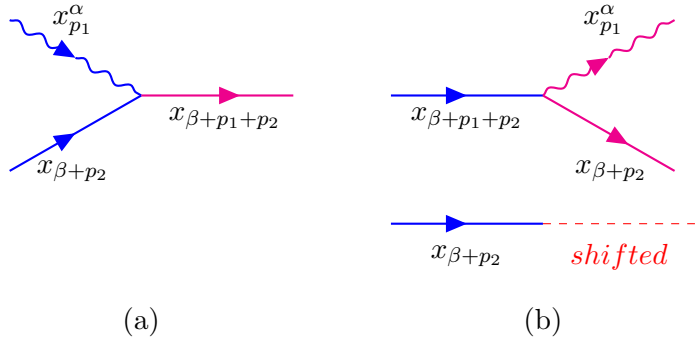

\begin{figure}[hbtp]
\centering
\begin{tikzpicture}
	\draw[p_in] (210:2) -- (0,0);
	\node at (-.9,-1) {$x_{\as+p_1}$};
	\draw[p_in] (150:2) -- (0,0);
	\node at (-.9,1) {$x_{-\as+p_2}$};
	\draw[->,ph_out] (0,0) -- (1,0);
	\draw[ph_out] (1,0) -- (2,0);
	\node at (1.1,-.36) {$x^\as_{p_1+p_2}$};
	\node at (0,-3) {(a)};
	\end{tikzpicture}
\hspace{1cm}
\begin{tikzpicture}
	\draw[->,ph_in] (-2,0) -- (-1,0);
	\draw[ph_in] (-1,0) -- (0,0);
	\node at (-1,-.36) {$x^\as_{p_1+p_2}$};
	\draw[p_out] (0,0) -- (330:2);
	\node at (.9,-1) {$x_{\as+p_1}$};
	\draw[p_out] (0,0) -- (30:2);
	\node at(.8,1) {$x_{-\alpha+p_2}$};
	\draw[p_in] (-2,-1.7) -- node[label=below:$x_{-\as+p_2}$] {} (0,-1.7);
	\draw[dashed,red] (0,-1.7) -- node[label=below:$shifted$] {} (2,-1.7);
	\node at (0,-3) {(b)};
	\end{tikzpicture}
\caption{$[x_{\as+p_1},x_{-\as+p_2}]=-x^\as_{p_1+p_2}$; (a): $x_\as$-$x_{-\as%
}$ annihilation; (b): pair creation (similarly for $x_{\protect\alpha+p_1}$
and $x_{-\protect\alpha+p_2}$ interchanged).}
\label{int2}
\end{figure}
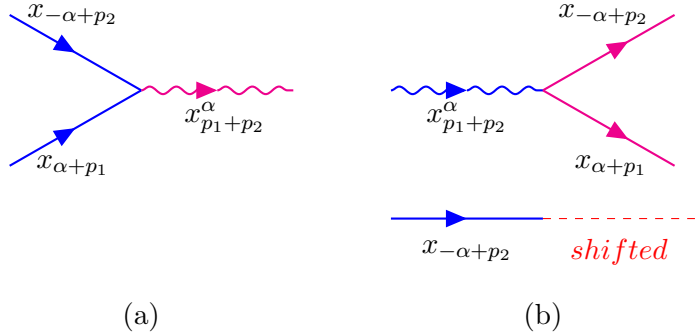

\begin{figure}[hbtp]
\centering
\begin{tikzpicture}
	\draw[p_in] (210:2) -- (0,0);
	\node at (-.9,-1) {$x_{\as+p_1}$};
	\draw[p_in] (150:2) -- (0,0);
	\node at (-.9,1) {$x_{\bb+p_2}$};
	\draw[p_out] (0,0) -- (2,0);
	\node at (1.1,-.36) {$x_{\as+\bb+p_1+p_2}$};
	\node at (0,-3) {(a)};
	\end{tikzpicture}
\hspace{1cm}
\begin{tikzpicture}
	\draw[p_in] (-2,0) -- (0,0);
	\node at (-1,-.36) {$x_{\as+\bb+p_1+p_2}$};
	\draw[p_out] (0,0) -- (330:2);
	\node at (.9,-1) {$x_{\as+p_1}$};
	\draw[p_out] (0,0) -- (30:2);
	\node at(.8,1) {$x_{\bb+p_2}$};
	\draw[p_in] (-2,-1.7) -- node[label=below:$x_{\bb+p_2}$] {} (0,-1.7);
	\draw[dashed,red] (0,-1.7) -- node[label=below:$shifted$] {} (2,-1.7);
	\node at (0,-3) {(b)};
	\end{tikzpicture}
\caption{$[x_{\as+p_1},x_{\bb+p_2}]=\eps(\as,\bb) x_{\as+\bb+p_1+p_2}$; (a):
$x_\as$-$x_{\bb}$ scattering; (b): $x_{\as+\bb}$ decay into $x_\as$ and $x_
\bb$ (similarly for $x_{\protect\alpha+p_1}$ and $x_{\bb+p_2}$
interchanged). }
\label{int3}
\end{figure}

A particular case represented by Fig. \ref{int3} is the interaction among
gluons.

Notice that for each interaction as in (a) of Figs. \ref{int1} $\div $ \ref%
{int3} there is an amplitude for a shift of the two particles without
interaction. This allows for an interaction as in (b) of the same Fig. at a
later time. 

\section{The Poincar\'e group}

\label{s:Poin}

We refer to section \textbf{2.3} of our previous paper \cite{mym1}, in
particular we denote by $\rho_1,\rho_2$ the roots $k_5-k_6$ and $k_5+k_6$
respectively.

We have a complex $\au\oplus\au$ Lie algebra $\lore$ generated by $x_{\pm
\rho_1}$, $x_{\pm \rho_2}$ and the corresponding Cartan generators $%
h_{\rho_1}$, $h_{\rho_2}$.

The spin subalgebra $\spin\in\lore$ is the compact form of the subalgebra
with generators $R^+:=x_{\rho_1}+x_{\rho_2}$, $R^-:=x_{-\rho_1}+x_{-\rho_2}$
and $H_R:=\frac12(h_{\rho_1}+h_{\rho_2})$, namely $\spin$ is generated by $%
R^+ + R^-$, $i\left( R^+ - R^-\right)$ and $iH_R$.

We denote by $w$ the Pauli-Lubanski vector and we classify the $\ggu$
generators $x_{\as+p}$ or $x^\as_p$ with respect to $m^2=-p^2$ and $w^2$,
the two Casimir invariants of the Poincar\'e group. We use the shorthand
notation $k:=\pm k_1\pm k_2\pm k_3\pm k_4\pm k_7\pm k_8$, $k_e$ (resp. $k_o$%
) when $k$ has an even (resp. odd) number of $+$ signs.
\bea{ll} (-p^2,w^2)
& \text{generator}\\
\hline \vspace{-1em}\\
(m^2,0) & x_{\as+p}, x^\as_p \, |\, \as=\pm k_i \pm k_j, \ i,j\notin
\{5,6\}, \, -p^2=m^2\vspace{2pt}\\
(m^2, \frac34 m^2) & x_{\as+p}, x^\as_p \, |\, \as=\frac12(k_o\pm
(k_5-k_6)), \, -p^2=m^2\vspace{2pt}\\
(m^2, \frac34 m^2) & x_{\as+p}, x^\as_p \, |\, \as=\frac12(k_e\pm
(k_5+k_6)), \, -p^2=m^2\vspace{2pt}\\
(m^2, 2 m^2) & x_{\as+p}, x^\as_p \, |\, \as=k_i\pm k_5 \text{ or } \as%
=k_i\pm k_6, \, -p^2=m^2\vspace{2pt}\\
(0,0) & \text{ all generators $x_{\as+p}$ or $x^\as_p$ such that $p^2=0$}
\eea

Let $\ggue$ be the extension of $\ggu$ that includes all timelike and
lightlike momenta (not necessarily with integer component), and let $\ggupe$
the subalgebra $\ggue$ of massive particles, namely the extension of the
subalgebra $\ggup$ introduced in Remark \sref{ggup}. We regard the following
proposition as fundamental for the relativistic behavior of our model.

\begin{proposition}
There is a natural action of the Poincar\`e group $\poig\, :\, \ggue \to %
\ggue$. Let $\poi=(\Lambda,a)$ be an element of $\poig$, where $\Lambda$ is
a Lorentz transformation and $a$ a translation.\newline
The action extends by linearity the following action on the generators $x_{%
\as+p}$ and $x^\as_p$ of $\ggue$.\newline
\ben

\item If $p^2<0$, fix a transformation $\Lambda_p$ such that $\Lambda_p
(m,0,0,0) = p$ and let $W(\Lambda,p):=\Lambda^{-1}_{\Lambda p} \Lambda
\Lambda_p$ be the Wigner rotation induced by $\Lambda$ \be\label{poi1} \poi%
(x_{\as+p}) = e^{ia\cdot \Lambda p} e^{\adj(R)} x_{\as+\Lambda p}\ , \poi(x^%
\as_p) =\ e^{ia\cdot \Lambda p} e^{\adj(R)} x^\as_{\Lambda p} \ee
where $\adj(R)$ is the adjoint action of the generator $R\in\spin$ of the
Wigner rotation $W(\Lambda,p)$.

\item If $p^2=0$ and $w^2=0$ the action reduces to \be\label{poi2} \poi(x_{%
\as+p}) = e^{ia\cdot \Lambda p} e^{i\theta(\Lambda) \lam} x_{\as+\Lambda p}\
, \ \poi(x^\as_p) = e^{ia\cdot \Lambda p} e^{i\theta(\Lambda) \lam} x^\as%
_{\Lambda p} \ee
where $\lam= 0, \pm \frac12, \pm 1$ is the helicity of $\as$ and $\theta$ is
the angle of the $SO(2)$ rotation along the direction of $\vec p$, analogous
to the Wigner rotation of the massive case. \een
The Poincar\`e group is a subgroup of the automorphism group of $\ggupe$.
\end{proposition}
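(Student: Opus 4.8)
The plan is to establish three facts: that each $\poi=(\Lambda,a)\in\poig$ defines an invertible linear endomorphism of $\ggue$; that the resulting assignment is a group homomorphism into the group $GL(\ggue)$ of invertible linear self-maps of $\ggue$; and that on the massive subalgebra $\ggupe$ each such endomorphism respects all of the relations \eqref{crgu}, so that its image lies in $\mathrm{Aut}(\ggupe)$. Well-definedness is the routine part: a Lorentz transformation preserves $p^2$, so it maps $\{p^2<0\}$ to itself and $x_{\as+\Lambda p},x^\as_{\Lambda p}$ are again generators of $\ggupe$ (and $\{p^2=0\}$ to itself, for the lightlike formula \eqref{poi2}); the phase $e^{ia\cdot\Lambda p}$ is a nonzero scalar; and $e^{\adj(R)}$ with $R\in\spin$ is invertible, being the exponential of the derivation $\adj(R)$. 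Compatibility with \eqref{extra1} and \eqref{extra2} follows because the spin rotation acts linearly on the charges $\as$.

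For the homomorphism property I would apply $\poi_1\poi_2$ to a generator and compare with the action of $(\Lambda_1\Lambda_2,\,a_1+\Lambda_1 a_2)$. The translation phases combine correctly because $\Lambda_1$ preserves the Minkowski product, $a_2\cdot\Lambda_2 p+a_1\cdot\Lambda_1\Lambda_2 p=(a_1+\Lambda_1 a_2)\cdot\Lambda_1\Lambda_2 p$; and the Wigner factors combine by the standard cocycle $W(\Lambda_1\Lambda_2,p)=W(\Lambda_1,\Lambda_2 p)\,W(\Lambda_2,p)$. Writing $e^{\adj(R)}=\mathrm{Ad}(\exp R)$ and using that $\mathrm{Ad}$ is multiplicative on the image of the little group $\spin$, the cocycle yields $e^{\adj R(\Lambda_1,\Lambda_2 p)}e^{\adj R(\Lambda_2,p)}=e^{\adj R(\Lambda_1\Lambda_2,p)}$. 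This produces a homomorphism $\poig\to GL(\ggue)$, whose inverses are supplied by $\poi^{-1}=(\Lambda^{-1},-\Lambda^{-1}a)$, so it remains only to see that the image consists of automorphisms.

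The automorphism property I would verify relation by relation in \eqref{crgu}. In each case the output momentum on both sides is $\Lambda(p_1+p_2)$ and the phases multiply to $e^{ia\cdot\Lambda(p_1+p_2)}$, which matches the momentum-additivity of the bracket; the first relation is then immediate, since the images of $x^\as_{p_1}$ and $x^\bb_{p_2}$ are combinations of neutral generators whose brackets vanish. For a \emph{fixed} $R$ the map $e^{\adj(R)}$ is an inner automorphism, so it automatically preserves the Euclidean pairing $(\as,\bb)$ entering the second relation, and the asymmetry function $\eps(\as,\bb)$ and the Cartan-type term $-x^\as_{p_1+p_2}$ entering the third. This is consistent with the classification table, since the Poincar\'e Casimirs $(m^2,w^2)$ are invariant and the spin rotation, acting only in the $k_5,k_6$ plane attached to $\rho_1,\rho_2$, preserves each $w^2$-class and fixes the spin-$0$ charges $\as=\pm k_i\pm k_j$ with $i,j\notin\{5,6\}$ outright.

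The hard part --- and the one point where the massive restriction is essential --- is that the Wigner generator $R=R(\Lambda,p)$ depends on $p$, so a single commutator of \eqref{crgu} involves three a priori different rotations $R(\Lambda,p_1)$, $R(\Lambda,p_2)$ and $R(\Lambda,p_1+p_2)$, and I must show they act compatibly on $\as$, $\bb$ and $\as+\bb$. The observation I would exploit is that for timelike $p$ the boost $\Lambda_p$ depends only on the velocity $\vec p/E$; hence momenta sharing a common velocity --- in particular $p_1$, $p_2$ and $p_1+p_2$ when $p_1$ and $p_2$ are parallel four-vectors --- determine the \emph{same} Wigner rotation, the three exponentials collapse to one, and the inner-automorphism argument applies verbatim. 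Reconciling brackets of momenta with distinct velocities is the genuine content of ``automorphism in the massive sector,'' and is where I expect the real work to lie; once that compatibility is in place, bijectivity together with bracket-preservation gives $\poi\in\mathrm{Aut}(\ggupe)$, and the homomorphism of the previous step exhibits $\poig$ as a subgroup of $\mathrm{Aut}(\ggupe)$.
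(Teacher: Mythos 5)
Your overall route coincides with the paper's: the group structure is delegated to Wigner's theory of induced representations (your cocycle identity $W(\Lambda_1\Lambda_2,p)=W(\Lambda_1,\Lambda_2 p)\,W(\Lambda_2,p)$ is exactly what the paper covers by citing Wigner), and bracket preservation is reduced to the fact that $e^{\adj(R)}$ is the exponential of a nilpotent derivation, which the paper makes explicit via the binomial identity $\tfrac{1}{s!}\der^{s}[x,y]=\sum_{i+j=s}\tfrac{1}{i!\,j!}[\der^{i}x,\der^{j}y]$ in the style of Carter's Lemma 4.3.1. On two counts you are actually more thorough than the paper: you check all three relations of \eqref{crgu} (the paper writes out only the case $\as+\bb\in\Phi_8$ and concludes with ``similarly for the other commutators''), and you verify the composition law and the combination of translation phases explicitly.

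However, the step you explicitly leave open --- reconciling the three a priori different Wigner rotations $R(\Lambda,p_1)$, $R(\Lambda,p_2)$ and $R(\Lambda,p_1+p_2)$ that enter a single commutator --- is precisely the nontrivial content of the automorphism claim, and your proposal does not close it. Your diagnosis is accurate: the collapse to a single $R$ is automatic only when $p_1$ and $p_2$ are parallel timelike vectors, since the standard boost $\Lambda_p$ depends only on $\vec p/E$; for non-parallel momenta one has $W(\Lambda,p_1)\neq W(\Lambda,p_2)$ in general (Thomas precession), and then the required identity $[e^{\adj(R_1)}x,\,e^{\adj(R_2)}y]=e^{\adj(R_{12})}[x,y]$ no longer follows from the derivation property, because each $e^{\adj(R_i)}$ mixes the charge labels within a spin multiplet differently. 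So, as written, the proposal is an incomplete proof. You should be aware that the paper's own proof does not supply this missing step either: in its displayed computation a single $e^{\adj(R)}$ is applied simultaneously to $x_{\as+\Lambda p_1}$, to $x_{\bb+\Lambda p_2}$ and to the output $x_{\as+\bb+\Lambda(p_1+p_2)}$, i.e.\ the three Wigner rotations are silently identified. You have therefore located the genuine gap correctly, but neither your argument nor the paper's fills it.
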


\begin{proof}
	The action $\poi$ on each generator with a certain mass and spin/helicity acts as the irreducible \emph{induced representation}, introduced by Wigner, \cite{wigner}.\\
	We only need to prove that it is an automorphism of $\ggupe$, namely that $\poi$ is non-singular and preserves the Lie product \eqref{crgu}. Part of the proof is similar to the classical one, see Lemma 4.3.1 in \cite{carter}.\\
	The fact that $\poi$ is non-singular comes from the obvious existence of its inverse transformation. We are left with the proof that $\poi([X,Y])=[\poi(X),\poi(Y)]$.\\
	Let us consider in particular $\poi(x_{\as+p})$ in \eqref{poi1}. Since $R$ is an $\eo$ generator then $\adj(R)$ is nilpotent, namely
	 $\der^r=0$ for some $r$ and
	\be e^{\der} = 1+\der+\frac{\der^2}{2!}+ ... +\frac{\der^{r-1}}{(r-1)!}\ee
	We have
	\bea{ll}
	\dfrac1{s!}\der^s [x,y] &=
	\dfrac1{s!}\sum_{i=0}^s{\binom{s}{i} [\der^i x,\der^{s-i}y]}\\
	&=\sum_{\substack{i,j\\i+j=s}}{\dfrac1{i!\, j!}\left[\der^i x,\der^jy\right]}
	\eea
	and also that $\der^t=0$ for $t\ge r$ implies
	\be
	\sum_{\substack{i,j}}{\dfrac1{i!\, j!}\left[\der^i x,\der^j y\right]}=0 \text{ if $i+j\ge r$}
	\ee
	Let $\as+\bb\in\Phi_8$ and $p_1^2,p_2^2<0$. We get:
	\bea{l}
	\poi([x_{\as+p_1},x_{\bb+p_2}]) = \poi(\ep(\as,\bb)x_{\as+\bb+ p_1+p_2}) = e^{ia\cdot \Lambda (p_1+p_2)} e^{\adj(R)} [x_{\as+\Lambda p_1},x_{\bb+\Lambda p_2}]\\
	\hspace{4em}=\ e^{ia\cdot \Lambda (p_1+p_2)}\sum_{s\ge0}\sum_{\substack{i,j\\i+j=s}} {\dfrac1{i!\, j!}\left[\der^i x_{\as+\Lambda p_1},\der^j x_{\bb+\Lambda p_2}\right]}\\
	\hspace{4em}=\ e^{ia\cdot \Lambda p_1+\Lambda p_2}\sum_{i\ge0}\sum_{j\ge0}{\dfrac1{i!\, j!}\left[\der^i x_{\as+\Lambda p_1},\der^j x_{\bb+\Lambda p_2}\right]}\\
	\hspace{4em}=\ [\poi (x_{\as+p_1}),\poi (x_{\bb+p_2})]
	\eea
	Similarly for the other commutators in \eqref{crgu}.
\end{proof}

The action $\poi$ can be easily extended to $\sggu$ by acting accordingly on
the Grassmann variable in order to get the variable associated to the
transformed generators of $\ggu$.

\section{Initial Quantum State\label{s:stat}}

The initial quantum state of our model of the expanding early Universe is an
element of the universal enveloping algebra $\usgu$ of $\sggu$, namely an
element of the tensor algebra built on the generators of $\sggu$ \textit{%
modulo} the relations defining the product in the algebra itself. The
initial generators are all in pairs with opposite helicity and opposite
3-momentum, \cite{mym1}, and have a phase or amplitude associated to each of
them as a complex coefficient. The interactions and expansions starting from
the initial state are such that \textit{locally} the quantum state is an
element of the universal enveloping algebra. Interference plays the crucial
role in the quantum behavior of the model, including repulsive \textit{versus%
} attractive forces. \textit{The quantum nature of gravity appears through
the quantum nature of spacetime}: at every cosmological instant, a point in
space has an amplitude which is the sum of the amplitudes for particles to
be at that point.

The initial state has the mean energy of the Universe concentrated on the
generators that interact with each other at $t=0$. The choice of the initial
state is crucial in determining the likelihood for the existence of
particles and of an eventual symmetry breaking. It is beyond the scope of
this paper to investigate this subject in depth; an algorithm based on the
algebra and the expansion rule that we have introduced can be the basis for
computer calculations, which should shed some light on the physical
consequences of the choice of the initial quantum state.

\section{Vertex-type algebra and \textit{Gravitahedra}\label{s:vert}}

Space expansion leads to an enrichment of the algebra. The locality of
interactions suggests to embed the algebra in a vertex-type operator
algebra, in which the generators of $\sggu$ act as vertex operators on a
discrete space that is being built up, step by step, by $\sggu$ driven
interactions.

The tree structure of the interactions allows for a description of
scattering amplitudes in terms of \textit{associahedra} or \textit{%
permutahedra}, \cite{marni}-\nocite{nima1}\nocite{miz}\nocite{nima2}\nocite%
{stash1}\nocite{stash2}\nocite{tonks}\cite{loday}, with structure constants
attached to each vertex; see Fig. \ref{f:Ktrees} for the interaction of four
particles, producing the \textit{associahedron} $\Kass_{4}$. A vertex is
interpreted as an interaction with universal time flowing from top to bottom
in the trees of Fig. \ref{f:Ktrees}. However, if one includes the \textit{%
gravitational effect} of space expansion, one should describe the
interactions through permutahedra $\Perm_{n-1}$ rather than associahedra;
see Fig. \ref{f:Ptrees} for the interaction of four particles, producing the
\textit{permutahedron} $\Perm_{3}$.
\begin{figure}[hbtp]
\centerline{		\includegraphics[scale=0.2]{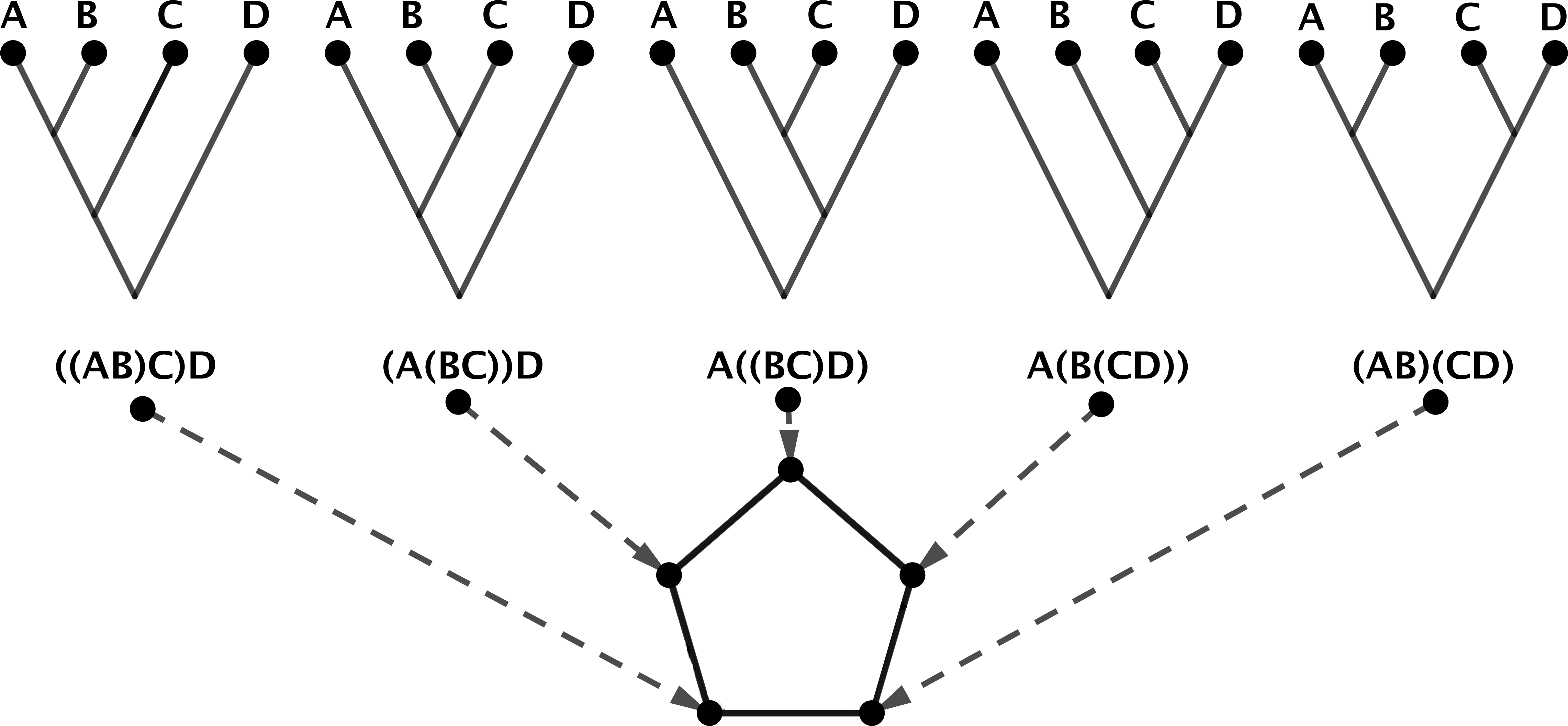}}
\caption{Associahedron $\Kass_4$. Adjacent vertices $(st)u \to s(tu)$, for
sub-words $s,t,u$}
\label{f:Ktrees}
\end{figure}
\begin{figure}[hbtp]
\centerline{		\includegraphics[scale=.9]{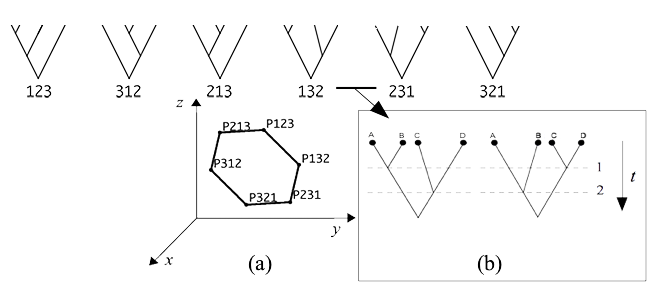}}
\caption{Permutahedron $\Perm_3$. Interaction of 4 particles.}
\label{f:Ptrees}
\end{figure}
The two trees in Fig. \ref{f:Ptrees} (b) are different due to the spreading
of particles in space, because the same interactions occur at different
times (represented by the horizontal lines).

A complete graphical description of the interactions, including the
spacetime effects, hence gravity, can be quite complicated and needs a deep
study. A research program with this goal has initiated, and the name \textit{%
gravitahedra} has been coined for the polytopes that will eventually, and
hopefully, describe such interactions.

The fact that \textit{locally} the quantum state is an element of the
universal enveloping algebra means that we can assign to it labels $q$ of
space $Q$, which are triples of rational numbers, due to the expansion by $%
\vec{p}/E$, where $E,p_{x},p_{y},p_{z}$ are integers, \cite{mym1}. The
vertex-type algebra is therefore the algebra $\usgu(Q)$, whose relations
have been extended in order to include the commutation of elements with
different space-labels.

\section{Conclusion}

In the pair of papers given by \cite{mym1} and the present paper, we have
presented an intrinsically quantum and relativistic theory of the creation
of spacetime starting from a quantum state as cosmological boundary
condition, which we conceive to play a key role in any fundamental theory of
Quantum Gravity. We have discussed the general framework of a workable
model, based on a rank-12 infinite dimensional Lie superalgebra, which can
be applied to the quantum era of the first cosmic evolution. Our model can
accommodate the degrees of freedom of the particles we know, \textit{without
superpartners}, namely spin-$\frac{1}{2}$ fermions and spin-$0$ and spin-$1$
bosons obeying the proper statistics.

The quantum nature of gravity is intrinsically unobservable, because
observation implies the destruction of the entanglement and the collapse of
the wavefunction. Thus, the deal in Quantum Gravity is the following : the
intrinsically quantum and relativistic description of an intrinsically
unobservable regime should be made consistent with the existence of a
macroscopic observer, and thus of a (semi)classical observational symmetry,
emerging in the theormodynamical/macroscopic limit in which the entanglement
becomes irrelevant. Our model tackles this crucial issue of Quantum Gravity,
and solves it with elegance : indeed, the Poincar\'{e} group emerges from
both the \textquotedblleft spin" sector ($\eo$) and the kinematical sector
(complementary of $\eo$ in $\ggu$) of the Lie superalgebra $\sggu$. Besides
the absence of superpartners and the implementation of the Pauli exclusion
principle, the emergence of the Poincar\'{e} group is a crucial feature of
our model. We should stress that, of course, the Poincar\'{e} group can be
defined only in the thermodynamical limit in which the observer can be
consistently decoupled from the evolutive dynamics of the Universe, given
\textit{in toto} by $\sggu$. Especially in an early Universe, the
back-reaction of the observer on the object of the observation should be
relevant, and thus the abstraction of a decoupled and distinct observer is
not totally consistent during the early stages of the Universe.

Many physical properties have still to be verified and/or fulfilled, like
the proton decay, the confinement of quarks, the attractive nature of
gravity on the large scale. The general framework of the model leaves
however a great freedom of choice, and this is as a benefit for those who
believe this is a promising approach and wish to explore it.\newline

There is much left for future work, to start with the definition of a
particular quantum initial state allowing to perform some preliminary
computer calculations that may give an idea of how the model effectively
works. In particular, the density matrix, von Neumann entropy, mean energy,
scattering amplitudes can be \textit{explicitly calculated} according to our
model.

We end this series of two papers by recapitulating what we consider the main
physical features of our approach:

\begin{description}
\item[I)] spacetime is the outcome of the interactions driven by an
infinite-dimensional Lie superalgebra $\sggu$; it is discrete, finite and
expanding;

\item[II)] the algebra $\sggu$ incorporates 4-momentum and charge
conservation; it involves fermions and bosons, with fermions fulfilling the
Pauli exclusion principle;

\item[III)] $\sggu$ is a Lie superalgebra without any supersymmetry forcing
the existence of superpartners for the particles of the Standard Model;

\item[IV)] every particle has positive energy and it is either timelike or
lightlike;

\item[V)] the initial state is an element of the universal enveloping
algebra of $\sggu$;

\item[VI)] the interactions are local, and the whole algebraic structure is
a vertex-type algebra, due to a mechanism for the expansion of space (in
fact, an expansion of matter and radiation);

\item[VII)] the emerging spacetime inherits the quantum nature of the
interactions, hence Quantum Gravity is an expression for quantum spacetime -
in particular, there is no spin-2 particle;

\item[VIII)] the Poincar\'{e} group has a natural action on the local
algebra;

\item[IX)] once an initial state is fixed, the model can be viewed as an
algorithm for explicit computer calculations of physical quantities, like
scattering amplitudes, density matrix, partition function, mean energy, von
Neumann entropy, e\textit{tc.}.
\end{description}

\appendix

\section{Appendix\label{a1}}

We prove Proposition \sref{e9ext}.

The algebra $\ggu$ with relations \eqref{crgu}, \eqref{extra1}, %
\eqref{extra2} is obviously infinite dimensional, and its product is
antisymmetric. We only need to prove that it fulfills the Jacobi identity.%
\newline

Throughout the proof we strongly rely on the following standard results, see
\cite{mym1} and Refs. therein.

\begin{proposition}
\label{sproots} For each $\as,\bb \in \Phi_8$ the scalar product $(\as,\bb)
\in \{\pm 2, \pm 1, 0\}$; $\as + \bb$ ( respectively $\as - \bb$) is a root
if and only if $(\as , \bb) = -1$ (respectively $+1$); if both $\as+\bb$ and
$\as-\bb$ are not in $\Phi_8\cup\{0\}$ then $(\as,\bb)=0$.\newline
For $\as , \bb \in \Phi_8$ if $\as+\bb$ is a root then $\as - \bb$ is not a
root.
\end{proposition}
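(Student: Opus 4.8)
The plan is to reduce everything to two standard facts about the simply-laced root system $\Phi_8$ of $\eo$: every root $\gamma$ has squared length $(\gamma,\gamma)=2$, and the root lattice is integral, so $(\as,\bb)\in\zz$ for all $\as,\bb\in\Phi_8$. First I would combine integrality with the Cauchy--Schwarz inequality
\[
(\as,\bb)^2\le(\as,\as)(\bb,\bb)=4,
\]
to conclude $(\as,\bb)\in\{0,\pm1,\pm2\}$, with $|(\as,\bb)|=2$ forcing equality in Cauchy--Schwarz and hence $\bb=\pm\as$. This settles the first assertion.

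For the characterisations of when $\as\pm\bb$ is a root, I would treat the two implications separately. The forward direction is a pure length computation: if $\as+\bb\in\Phi_8$ then expanding $(\as+\bb,\as+\bb)=2$ gives $4+2(\as,\bb)=2$, so $(\as,\bb)=-1$; replacing $\bb$ by $-\bb$ shows that $\as-\bb\in\Phi_8$ forces $(\as,\bb)=+1$. The converse is the one genuinely structural step, and here I would invoke the $\mathfrak{sl}_2$ root-string theorem: for the $\as$-string $\bb-p\as,\dots,\bb+q\as$ through $\bb$ one has $p-q=(\bb,\as^\vee)=(\as,\bb)$ since $\as^\vee=\as$ in this normalisation. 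When $(\as,\bb)=-1$ this yields $q\ge1$, so $\as+\bb$ lies in the string and is a root; the case $(\as,\bb)=+1$ then follows by applying the same statement to $\as$ and $-\bb$.

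The last two assertions follow formally. For the third sentence I would run through the admissible values of $(\as,\bb)$: the values $\mp1$ are excluded by the two characterisations just proved (they would make $\as\pm\bb$ a root), while $(\as,\bb)=\pm2$ is excluded because it forces $\bb=\pm\as$ and hence $\as\mp\bb=0\in\Phi_8\cup\{0\}$; only $(\as,\bb)=0$ survives. For the final sentence, if $\as+\bb$ is a root then $(\as,\bb)=-1$, whence $(\as,\bb)\ne+1$ and $\as-\bb$ cannot be a root. I expect the only real obstacle to be the converse direction of the root characterisation, which cannot be obtained from length considerations alone and requires the representation theory of the $\mathfrak{sl}_2$-triple attached to $\as$; a self-contained alternative for the very last statement is to observe that if both $\as+\bb$ and $\as-\bb$ were roots, the two length computations would force $(\as,\bb)=-1$ and $(\as,\bb)=+1$ simultaneously, a contradiction.
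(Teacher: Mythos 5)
Your proof is correct. Note that the paper itself does not actually prove Proposition \ref{sproots}: it is introduced in the appendix with the words ``we strongly rely on the following standard results, see \cite{mym1} and Refs.\ therein,'' so there is no in-paper argument to compare against. What you have written is the standard derivation one would find in those references: integrality of the $E_8$ lattice plus Cauchy--Schwarz (with the equality case pinning down $(\as,\bb)=\pm2$ only for $\bb=\pm\as$) gives the list of admissible values; the length computation $(\as\pm\bb,\as\pm\bb)=4\pm2(\as,\bb)$ gives the forward implications; and the $\mathfrak{sl}_2$ root-string theorem $p-q=(\bb,\as)$ gives the converses. You are right to flag the root-string step as the one genuinely structural ingredient --- it cannot be replaced by length considerations --- and your observation that the final sentence also follows self-containedly from the two incompatible length computations is a nice redundancy. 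One small point worth making explicit if you write this up: when $(\as,\bb)=-1$ you have $\bb\neq\pm\as$, so the root string through $\bb$ in the direction $\as$ consists entirely of roots and never passes through $0$; this is what licenses the conclusion that $\as+\bb$ is actually a root rather than merely a lattice vector in the string.
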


\begin{proposition}
\label{epsprop} The asymmetry function $\ep$ satisfies, for $\as , \bb, \gh %
\in \Ll$:
\bes\ba{rrcl} i) & \ep (\as + \bb, \gh) & =& \ep (\as , \gh)\ep (%
\bb , \gh) \\
ii) &\ep (\as ,\bb+\gh) & = &\ep (\as , \bb)\ep (\as , \gh) \\
iii) &\ep (\as , \as) & = & (-1)^{\frac12 (\as,\as)} \impl \ep (\as , \as) =
-1 \text{ if } \as\in\Phi_8\\
iv) &\ep (\as , \bb) \ep (\bb , \as) &=& (-1)^{(\as , \bb)} \impl \ep (\as , %
\bb) = -\ep (\bb , \as)\text{ if } \as,\bb,\as+\bb\in\Phi_8\\
v) &\ep (0 , \bb) &=& \ep (\as , 0) = 1 \\
vi) &\ep (-\as , \bb) &=& \ep (\as , \bb)^{-1} = \ep (\as , \bb) \\
vii) &\ep (\as , -\bb) &=& \ep (\as , \bb)^{-1} = \ep (\as , \bb) \\
\ea\ees
\end{proposition}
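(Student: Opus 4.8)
The plan is to construct the asymmetry function from the standard cocycle recipe of \cite{kac, graaf} and then read off all seven identities from a single underlying bilinear form. After fixing a total order on the simple roots, I would introduce an auxiliary \emph{bilinear} (generally non-symmetric) form $\langle\cdot,\cdot\rangle$ on the root lattice $\Ll$, defined on simple roots by $\langle\as_i,\as_j\rangle=(\as_i,\as_j)$ for $i>j$, by $\langle\as_i,\as_i\rangle=\tfrac12(\as_i,\as_i)$, and by $\langle\as_i,\as_j\rangle=0$ for $i<j$, and then extended by $\zz$-bilinearity. The crucial structural fact is that this form is a ``square root'' of the symmetric form, in the sense that $\langle\as,\bb\rangle+\langle\bb,\as\rangle=(\as,\bb)$ and $\langle\as,\as\rangle=\tfrac12(\as,\as)$ for all $\as,\bb\in\Ll$; both identities reduce to the defining values on simple roots and then propagate by bilinearity. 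Because $\Phi_8$ is the simply laced $E_8$ system, $(\as_i,\as_i)=2$ and every off-diagonal inner product is an integer, so $\langle\cdot,\cdot\rangle$ is $\zz$-valued on the whole lattice; this integrality is exactly what legitimizes defining $\ep(\as,\bb):=(-1)^{\langle\as,\bb\rangle}$, which is then manifestly $\pm1$-valued.

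With $\ep$ so defined, the seven properties become formal consequences of pushing $\langle\cdot,\cdot\rangle$ through the homomorphism $n\mapsto(-1)^n$. Properties $(i)$ and $(ii)$ are just left- and right-bilinearity of $\langle\cdot,\cdot\rangle$. Property $(iii)$ is $\ep(\as,\as)=(-1)^{\langle\as,\as\rangle}=(-1)^{\frac12(\as,\as)}$ by the square-root identity, and the stated implication follows because $(\as,\as)=2$ for every $\as\in\Phi_8$. Property $(iv)$ reads $\ep(\as,\bb)\ep(\bb,\as)=(-1)^{\langle\as,\bb\rangle+\langle\bb,\as\rangle}=(-1)^{(\as,\bb)}$; to obtain its implication I would invoke Proposition~\sref{sproots}, which forces $(\as,\bb)=-1$ whenever $\as,\bb,\as+\bb\in\Phi_8$, so that $\ep(\as,\bb)\ep(\bb,\as)=-1$ and the two signs are indeed opposite. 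Properties $(v)$, $(vi)$, $(vii)$ follow from $\langle 0,\cdot\rangle=\langle\cdot,0\rangle=0$ together with the elementary observation that a $\pm1$ value equals its own inverse, giving $\ep(-\as,\bb)=(-1)^{-\langle\as,\bb\rangle}=(-1)^{\langle\as,\bb\rangle}=\ep(\as,\bb)$ and likewise in the second slot.

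The only genuine content, and hence the step I would treat most carefully, is the existence of an integer-valued bilinear $\langle\cdot,\cdot\rangle$ satisfying $\langle\as,\bb\rangle+\langle\bb,\as\rangle=(\as,\bb)$ and $\langle\as,\as\rangle=\tfrac12(\as,\as)$: everything after that is one-line algebra of the sign homomorphism. For a general lattice one must check that $\tfrac12(\as,\as)\in\zz$ on all lattice vectors, which can fail, but here it is guaranteed by the even, simply laced nature of the $E_8$ root lattice, so no obstruction arises. I would therefore devote the bulk of the write-up to verifying this integrality and the bilinear extension, after which identities $(i)$--$(vii)$ are immediate. Since $\ep$ is precisely the classical asymmetry function, an acceptable alternative is simply to cite its construction in \cite{kac, graaf} and limit the proof to checking that the normalizations listed in $(i)$--$(vii)$ hold for it.
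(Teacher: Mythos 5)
Your construction is correct, and it is in fact the standard one: the paper itself offers no proof of this proposition, stating it as a known result with a pointer to \cite{mym1} and the references therein (Kac, de Graaf), so your write-up supplies exactly the argument that citation stands in for. The key points all check out: the ordered bilinear form $\langle\cdot,\cdot\rangle$ agrees with the symmetrization $(\cdot,\cdot)$ and halves on the diagonal because both sides are bilinear and coincide on pairs of simple roots; integrality holds since the $E_8$ lattice is even; and $(i)$--$(vii)$ then reduce to the sign homomorphism $n\mapsto(-1)^n$, with the implication in $(iv)$ correctly obtained from Proposition \sref{sproots} (if $\as,\bb,\as+\bb\in\Phi_8$ then $(\as,\bb)=-1$, so the product of the two signs is $-1$). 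The one presentational caveat is that the paper's $\ep$ is a fixed function inherited from \cite{mym1}, whereas you define a particular representative depending on the chosen order of simple roots; since any two bimultiplicative functions satisfying the diagonal normalization differ by a coboundary and all of $(i)$--$(vii)$ are insensitive to that choice, this does not affect correctness, but it is worth a sentence acknowledging that you are exhibiting one admissible $\ep$ rather than deriving the properties of a previously fixed one.
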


By linearity it is sufficient to prove that the Jacobi identity holds for
the generators of the algebra. For each triple of generators $X,Y,Z$ we
write \be J_{1}:=\left[ [X,Y],Z\right] \ , \qquad J_{2}:= \left[ [Z,X],Y%
\right] \ , \qquad J_{3}:= \left[ [Y,Z],X\right] \ee We want to prove that $%
J:=J_{1}+J_{2}+J_{3}=0$.\\
For $p\neq 0$ we call the generators $x_{p}^{\as}$ of \textit{type 0} and $%
x_{\as+p}$ of \textit{type 1}.\\
We consider the various cases.\\

\begin{description}
\item[a)] \textit{At least} one of $X,Y,Z$ is of type-0

\begin{description}
\item[a1)] If $X,Y,Z$ are all of the type-0 then Jacobi holds trivially.

\item[a2)] If $X=x_{p_{1}}^{\as},Y=x_{p_{2}}^{\bb}$ are of type 0 and $Z=x_{%
\gh+p_{3}}$ is of type 1 then $J_{1}=0$, $J_{2}=(\as,\gh)(\bb,\gh)x_{\gh%
+p_{1}+p_{2}+p_{3}}$ and $J_{3}=-(\as,\gh)(\bb,\gh)x_{\gh+p_{1}+p_{2}+p_{3}}$%
, hence $J=0$.

\item[a3)] If $X=x_{p_{1}}^{\as}$ is of type 0 and $Y=x_{\bb+p_{2}},Z=x_{\gh%
+p_{3}}$ are of type 1, then $J_{1}=(\as,\bb)[x_{\bb+p_{1}+p_{2}},x_{\gh%
+p_{3}}]$, $J_{2}=(\as,\gh)[x_{\bb+p_{2}},x_{\gh+p_{1}+p_{3}}]$ and $%
J_{3}=-[x_{p_{1}}^{\as},[x_{\bb+p_{2}},x_{\gh+p_{3}}]]$. We have 3 cases:

\begin{description}
\item[a3.i)] $\bb+\gh\not\in \Phi _{8}\cup \{0\}$ then $J_{1}=J_{2}=J_{3}=0$;

\item[a3.ii)] $\bb+\gh\in \Phi _{8}$ then $J_{3}=-(\as,\bb+\gh)\ep(\bb,\gh%
)x_{\bb+\gh+p_{1}+p_{2}+p_{3}}=-(J_{1}+J_{2})$;

\item[a3.iii)] $\bb+\gh=0$ then $J_{1}=-(\as,\bb)x_{p_{1}+p_{2}+p_{3}}^{\bb}$%
, $J_{2}=(\as,\bb)x_{p_{1}+p_{2}+p_{3}}^{\bb}$ and $J_{3}=0$, hence $J=0$.
\end{description}
\end{description}

\item[b)] None of $X,Y,Z$ is of type-0. Let $X=x_{\as+p_{1}},Y=x_{\bb%
+p_{2}},Z=x_{\gh+p_{3}}$ be all of type 1. For any two roots of $\Phi _{8}$,
say $\as,\bb$ without loss of generality, we have three cases:

\begin{description}
\item[b1)] $\as+\bb\not\in \Phi _{8}\cup \{0\}$:\newline

\begin{description}
\item[b1.i)] if both $\as+\gh,\bb+\gh\not\in \Phi _{8}\cup \{0\}$ then $J=0$
trivially;

\item[b1.ii)] if $\bb+\gh\not\in \Phi _{8}\cup \{0\}$ and $\as+\gh\in \Phi
_{8}\cup \{0\}$ then $J_{1}=J_{3}=0$. Since both $(\as,\bb),(\bb,\gh)\in
\{0,1,2\}$ then $(\as+\gh,\bb)\geq 0$ hence if $\as+\gh\in \Phi _{8}$, then $%
\as+\bb+\gh\not\in \Phi _{8}\cup \{0\}$ and $J_{2}=0$. On the other hand if $%
\as=-\gh$ then $J_{2}=[x_{p_{1}+p_{3}}^{\as},x_{\bb+p_{2}}]=(\as,\bb)x_{\bb%
+p_{1}+p_{2}+p_{3}}$. But $(\bb,\gh)=-(\bb,\as)$ and $(\as,\bb),(\bb,\gh)\in
\{0,1,2\}$ imply $(\as,\bb)=0$ hence $J=0$;

\item[b1.iii)] if $\bb+\gh\in \Phi _{8}$ and $\as+\gh\in \Phi _{8}$ then $%
J_{2}=\ep(\gh,\as)[x_{\as+\gh+p_{1}+p_{3}},x_{\bb+p_{2}}]$ and $J_{3}=\ep(\bb%
,\gh)[x_{\bb+\gh+p_{2}+p_{3}},x_{\as+p_{1}}]$. If $\as+\bb+\gh\not\in \Phi
_{8}\cup \{0\}$ then $J_{2}=J_{3}=0$ hence $J=0$. If $\as+\bb+\gh\in \Phi
_{8}$ then $J_{2}+J_{3}=\ep(\gh,\as)(\ep(\gh,\bb)\ep(\as,\bb)+\ep(\bb,\gh)\ep%
(\bb,\as))x_{\as+\bb+\gh+p_{1}+p_{2}+p_{3}}$. Since $2=(\as+\bb+\gh,\as+\bb+%
\gh)=6+2(\as,\bb)+2(\bb,\gh)+2(\as,\gh)=2+2(\as,\bb)$, we get $(\as,\bb)=0$
and, from Proposition \sref{epsprop}, $\ep(\as,\bb)=\ep(\bb,\as)$ and $\ep(%
\gh,\bb)=-\ep(\bb,\gh)$, implying $J_{2}+J_{3}=0$ and $J=0$. Finally if $\as+%
\bb+\gh=0$ then $(\as,\bb)=(\as,-\as-\gh)=-2+1=-1$ and $\as+\bb$ would be a
root, contradicting the hypothesis.

\item[b1.iv)] if $\bb+\gh\in \Phi _{8}$ and $\as+\gh=0$ then $J_{2}=(\as,\bb%
)x_{\bb+p_{1}+p_{2}+p_{3}}$ and $J_{3}=\ep(\bb,\as)\ep(\bb-\as,\as)x_{\bb%
+p_{1}+p_{2}+p_{3}}=-x_{\bb+p_{1}+p_{2}+p_{3}}$. But $(\as,\bb)=-(\gh,\bb)=1$
hence $J_{2}+J_{3}=0$ and $J=0$.

\item[b1.v)] If $\bb+\gh\in \Phi _{8}$ and $\as+\gh\not\in \Phi _{8}\cup
\{0\}$ then $J_{2}=0$ and $(\bb,\as)\geq 0$, $(\gh,\as)\geq 0$ imply $(\bb+%
\gh,\as)\geq 0$ hence $\bb+\gh+\as\not\in \Phi _{8}\cup \{0\}$ therefore $%
J_{3}=\ep(\bb,\gh)[x_{\bb+\gh+p_{2}+p_{3}},x_{\as+p_{1}}]=0$ and $J=0$.

\item[b1.vi)] If $\bb+\gh=0$ and $\as+\gh\in \Phi _{8}$ then\\
$J_{2}=-\ep(\as,\bb)\ep(\as-\bb,\bb)x_{\as+p_{1}+p_{2}+p_{3}} =x_{\as
+p_{1}+p_{2}+p_{3}}$ and\\
 $J_{3}=-[x_{p_{2}+p_{3}}^{\bb},x_{\as+p_{1}}]=-x_{
\as+p_{1}+p_{2}+p_{3}}$, being $(\as,\bb)=-(\as,\gh)=1$, implying $J=0$.

\item[b1.vii)] If $\bb+\gh=0$ and $\as+\gh=0$ then $J_{2}=[x_{p_{1}+p_{3}}^{%
\bb},x_{\bb+p_{2}}]=2x_{\bb+p_{1}+p_{2}+p_{3}}$ and $%
J_{3}=-[x_{p_{2}+p_{3}}^{\bb},x_{\bb+p_{1}}]=-2x_{\bb+p_{1}+p_{2}+p_{3}}$
and $J=0$.

\item[b1.viii)] If $\bb+\gh=0$ and $\as+\gh\not\in \Phi _{8}\cup \{0\}$ then
$J_{2}=0$; $(\as,\bb)=0$ since $(\as,\bb)\geq 0$ and $(\as,\gh)=-(\as,\bb%
)\geq 0$, therefore $J_{3}=-[x_{p_{2}+p_{3}}^{\bb},x_{\as+p_{1}}]=0$ and $%
J=0 $.
\end{description}

\item From now on $\as+\bb,\as+\gh,\bb+\gh\in \Phi _{8}\cup \{0\}$.

\item[b2)] $\as+\bb\in \Phi _{8}$:

\begin{description}
\item[b2.i)] If $\as+\gh,\bb+\gh\in \Phi _{8}$ then $(\as+\bb+\gh,\as+\bb+\gh%
)=0$ hence $\as+\bb+\gh=0$. Then $J_{1}=-\ep(\as,\bb)x_{p_{1}+p_{2}+p_{3}}^{%
\as+\bb}$, $J_{2}=\ep(\as+\bb,\as)x_{p_{1}+p_{2}+p_{3}}^{\bb}$, $J_{3}=\ep(%
\bb,\as+\bb)x_{p_{1}+p_{2}+p_{3}}^{\as}$. Since $\ep(\as+\bb,\as)=\ep(\bb,\as%
+\bb)=\ep(\as,\bb)$, $x_{p}^{\as+\bb}=x_{p}^{\as}+x_{p}^{\bb}$, see %
\eqref{extra2}, we get $J=0$.

\item[b2.ii)] If $\as+\gh\in \Phi _{8}$ and $\bb+\gh=0$ then $\as-\bb\in
\Phi _{8}$ which is impossible.

\item[b2.iii)] If $\as+\gh=0$ and $\bb+\gh\in \Phi _{8}$ then $\bb-\as\in
\Phi _{8}$ which is impossible.

\item[b2.iv)] If $\as+\gh=0$ and $\bb+\gh=0$ then $\as=\bb$ which is
impossible.
\end{description}

\item[b3)] $\as+\bb=0$:

\begin{description}
\item[b3.i)] If $\as+\gh\in \Phi _{8}$ then $\bb+\gh=-\as+\gh\not\in \Phi
_{8}$; we can only have $\bb+\gh=0$ implying $\as=\gh$, that contradicts $\as%
+\gh\in \Phi _{8}$.

\item[b3.ii)] If $\bb+\gh\in \Phi _{8}$ then $\as+\gh=-\bb+\gh\not\in \Phi
_{8}$; we can only have $\as+\gh=0$ implying $-\as=\bb=\gh$ that contradicts
$\bb+\gh\in \Phi _{8}$.

\item[b3.iii)] If both $\as+\gh=0$ and $\bb+\gh=0$ then $\as=\bb$ which
contradicts $\as+\bb=0$.
\end{description}
\end{description}
\end{description}

This ends the proof. \esquare

\end{document}